\documentclass[11pt,fleqn]{amsart}

\textwidth 6.50in
\topmargin -0.50in
\oddsidemargin 0in
\evensidemargin 0in
\textheight 9.50in

\newtheorem{lemma}{Lemma}[section]

\newtheorem{corollary}[lemma]{Corollary}

\newtheorem{assumption}[lemma]{Assumption}

\usepackage{amsmath,amssymb,amsfonts}
\usepackage[round]{natbib}
\usepackage{dsfont}
\usepackage{enumerate}
\usepackage{mathrsfs}
\usepackage{graphicx}
\usepackage{nicefrac}
\usepackage[ansinew]{inputenc}
\usepackage{EGM-abbreviations}
\usepackage{pifont} 


\begin{document}
\title[Elliptical graphical modelling]{Elliptical graphical modelling}

\author[D. Vogel]{Daniel Vogel}
\author[R. Fried]{Roland Fried}

\address{Fakult\"at Statistik, Technische Universit\"at Dortmund, 
44221 Dortmund, Germany}
\email{daniel.vogel@tu-dortmund.de} \email{fried@statistik.tu-dortmund.de}

\keywords{Concentration matrix; Decomposable model; Deviance test; Partial correlation; Tyler matrix.}

\begin{abstract}
We propose elliptical graphical models based on conditional uncorrelatedness as a generalization of Gaussian graphical models by letting the population distribution be elliptical instead of normal, allowing the fitting of data with arbitrarily heavy tails. 
We study the class of proportionally affine equivariant scatter estimators and show how they can be used to perform elliptical graphical modelling, leading to a new class of partial correlation estimators and analogues of the classical deviance test. 
General expressions for the asymptotic variance of partial correlation estimators, unconstrained and under decomposable models, are given, and the asymptotic chi square approximation of the pseudo-deviance test statistic is proved.
The feasibility of our approach is demonstrated by a simulation study, using, among others, Tyler's scatter estimator, which is distribution-free within the elliptical model. 
Our approach provides a robustification of Gaussian graphical modelling. The latter is likelihood-based and known to be very sensitive to model misspecification and outlying observations. 
\end{abstract}

\maketitle

\section{Introduction and notation}  

The statistical theory of undirected graphical models for continuous variables is usually based on the assumption of multivariate normality. 
In practice, data may deviate from the normal model in various ways. Outliers and heavy tails pose a problem of particular gravity: they frequently occur, and the normal likelihood methods, such as the sample covariance matrix, are very susceptible to them. Our objective is to deal with heavy-tailed data and to safeguard graphical modelling against the impact of faulty outliers. We restrict our attention to the case where we have only continuous variables and only undirected edges. Joint multivariate normality is often assumed in this situation, and the statistical methodology is called Gaussian graphical modelling.  
We propose the class of elliptical distributions as a more general model and call our approach elliptical graphical modelling.

The lack of robustness of Gaussian graphical modelling has been noted by several authors. Four proposals of robust approaches to Gaussian graphical modelling are known to us:
\citet{Becker2005} and \citet{Gottard2010} suggest replacing the sample covariance matrix by the reweighted minimum covariance determinant estimator. \citet{Miyamura2006} propose an alternative M-type estimation, and Finegold \& Drton (arXiv:1009.3669) consider robustified versions of the graphical lasso by \citet{Friedman2008}.

This article delivers a systematic treatment of the plug-in approach used in the first two references. We show that the sample covariance matrix may be replaced by any affine equivariant, root-$n$-consistent estimator. As long as ellipticity can be assumed, the classical Gaussian graphical modelling tools can be employed with simple adjustments. Thus the data analyst is free to choose the appropriate estimator, delivering the degree of robustness necessary for the data situation at hand. In order to reduce the search space, graphical modelling is often restricted to decomposable graphical models, which allow better interpretability, cf.\ \citet[][Chapter~12]{Whittaker1990}, but are also easier to handle mathematically. For conciseness we restrict our derivations to decomposable models.

We close this section by introducing some mathematical notation. Depending on the context, the symbol $\sim$ means distributed as or asymptotically equivalent. Finite index sets are denoted by small Greek letters. Subvectors and submatrices are referenced by subscripts, e.g.\ for $\alpha, \beta \subseteq \{1,...,p\}$ the $|\alpha|\times|\beta|$ matrix $S\!_{\alpha,\beta}$ is obtained from $S$ by deleting all rows that are not in $\alpha$ and all columns that are not in $\beta$. Similarly, the $p \times p$ matrix $(S\!_{\alpha,\beta})^{(p)}$ is obtained from $S$ by putting all rows not in $\alpha$ and all columns not in $\beta$ to zero.
We view this matrix operation as two operations performed sequentially: first $(\cdot)_{\alpha,\beta}$ extracting the submatrix and then $(\cdot)^{(p)}$ writing it back  on a blank matrix at the coordinates specified by $\alpha$ and $\beta$. Of course, the latter is not well defined without the former, but this allows us to write $(S_{\!\alpha,\beta}^{-1})^{(p)}$, for example. Subscripts have priority over superscripts, $S_{\!\alpha,\beta}^{-1}$ stands for $(S\!_{\alpha,\beta})^{-1}$.
Let $\Ss_p$ and $\Ss^+_p$ be the sets of all symmetric, respectively positive definite $p \times p$ matrices, and define  $A_D$ as the diagonal matrix having the same diagonal as $A \in \Rpp$. 
The Kronecker product $A \otimes B$ of two matrices $A,B \in \R^{p \times p}$ is defined as the $p^2 \times p^2$ matrix with entry
$a_{i,j} b_{k,l}$ at position $\{(i-1)p + k, (j-1)p + l\}$. Let $e_1, \ldots, e_p$ be the unit vectors in $\R^p$ and $1_p$ the $p$-vector consisting only of ones. Define  the matrices:
\[
	J_p = \sum_{i=1}^p e_i^{} e_i^T \otimes e_i^{} e_i^T, \qquad \
  K_p = \sum_{i=1}^p \sum_{j=1}^p e_i^{} e_j^T \otimes e_j^{} e_i^T, \qquad
  M_p = \frac{1}{2}\left( I_{p^2} + K_p \right),
\]
where $I_{p^2}$ denotes the $p^2 \times p^2$ identity matrix; $K_p$ is also called the commutation matrix. Finally, let $\vec \,(A)$ be the $p^2$-vector obtained by stacking the columns of $A \in \R^{p \times p}$ from left to right underneath each other. More on these concepts and their properties can be found in \citet{Magnus1999}.

\section{Elliptical graphical models}

We introduce elliptical graphical models in analogy to Gaussian graphical models. For details on the latter see \citet{Whittaker1990}, \citet{Cox1996}, \citet{Lauritzen1996} or \citet{Edwards2000}.

Consider the class $\Ee_p$ of all continuous, elliptical distributions on $\R^p$. A continuous distribution $F$ on $\mathds{R}^p$ is said to be elliptical if it has a density $f$ of the form
\begin{equation} \label{density}
	f(x) = \det(S)^{-1/2} g\big\{(x-\mu)^T S^{-1} (x-\mu)\big\}
\end{equation}
for some $\mu \in \mathds{R}^p$ and symmetric, positive definite $p \times p$ matrix $S\!$. We call  $S$ the shape matrix of $F$, and denote the class of all continuous elliptical distributions
on $\mathds{R}^p$ with the parameters $\mu$ and $S$ by $\mathscr{E}_p(\mu, S)$. A continuous distribution on $\R^p$ is called spherical if $S$ is proportional to the identity matrix. The shape matrix $S$ is unique only up to scale,
that is, $\Ee_p(\mu, S) = \Ee_p(\mu, c S)$ for any $c > 0$. Several forms of standardization have been suggested in the literature. \cite{Paindaveine2008} argues for $\det(S) = 1$. For our considerations the standardization of $S$ is irrelevant, and we understand the shape of an elliptical distribution as an equivalence class of positive definite random matrices being proportional to each other and call any matrix $S\!$ satisfying (\ref{density}) for a suitable function $g$ a shape matrix of $F$. We likewise view its inverse $K = S^{-1}$, which we call a pseudo concentration matrix of $F$. Furthermore let
\[
	h: \Ss^+_p \to \Ss_p: A \mapsto - \left(A^{-1}\right)_D^{-1/2} A^{-1} \left(A^{-1}\right)_D^{-1/2}
\] 
and $P = h(S)$.
The function $h$ is invariant to scale changes, i.e., $P$ is a uniquely defined parameter of $F \in \Ee_p(\mu,S)$. The diagonal elements of $P$ are equal to $-1$.
If the second-order moments of $X \sim F \in \Ee_p(\mu,S)$ exist, then $\Sigma = \var(X)$ is proportional to $S$. Consequently, the element $p_{i,j}$ of $P$ at position $(i,j)$ is the partial correlation of $X_i$ and $X_j$ given the other components of $X$ \citep[][Chapter~5]{Whittaker1990}. We call $P$ the generalized partial correlation matrix of $F$ and refer to it as partial correlation matrix for brevity.

The qualitative information of $P$ can be coded in an undirected graph $G = (V,E)$, where $V$ is the vertex set and $E$ the edge set, in the following way: the variables $X_1,\ldots,X_p$ are the vertices, and an edge is drawn between $X_i$ and $X_j$ if and only if $p_{i,j} \neq 0$ $(i,j = 1,\ldots,p; i \neq j)$. The graph $G$ thus obtained is called the generalized partial correlation graph of $F$. Formally we set $V = \{1,\ldots,p\}$ and write the elements of $E$ as unordered pairs $\{i,j\}$  $(i,j = 1,\ldots,p; i \neq j)$. The global and the local Markov property with respect to any generalized partial correlation graph $G$ are equivalent for any $F \in \Ee_p$ without any moment assumptions \citep{Vogel2010a}.

Let $\Ss^+_p(G)$ be the subset of $\Ss^+_p$ consisting of all positive definite matrices with zero entries at the positions specified by the graph $G = (V,E)$, i.e.,
\[
   K  \in \Ss^+_p(G) \ \ \Longleftrightarrow \ \   K  \in \Ss^+_p, \ \ k_{i,j} = 0 \  \ (i \neq j,\ \{i,j\} \notin E),
\]
and define
\[ 
   \Ee_p(G) = \left\{ \ F \in \Ee_p(\mu,K^{-1}) \ \middle| \ \mu \in \R^p, \  K \in \Ss^+_p(G)\  \right\}
\]
to be the elliptical graphical model induced by $G$. We call the model $\Ee_p(G)$ decomposable if $G$ is decomposable, i.e., if it possesses no chordless cycle of length greater than three. For alternative characterizations and properties of decomposable graphs see e.g.\ \citet[][Chapter~2]{Lauritzen1996}. 

In the remainder of this section we discuss the interpretation of an absent edge in the partial correlation graph
of $F \in \Ee_p$. Let us assume that the second-order moments of $X \sim F$ are finite.
The partial uncorrelatedness of, say, $X_1$ and $X_2$ given $X_3,\ldots,X_p$, i.e., $p_{1,2} = 0$, is to be understood as linear independence of $X_1$ and $X_2$ after the common linear effects of $X_3,\ldots,X_p$ have been removed.
A relation of similar type is conditional independence: roughly, $X_1$ and $X_2$ are conditionally independent given $X_3,\ldots,X_p$, if the conditional distribution of $(X_1,X_2)$ is a product measure for almost all values of the conditioning variable $(X_3,\ldots,X_p)$. In comparison to partial correlation we understand conditional independence as complete independence of $X_1$ and $X_2$ after the removal of all common effects of $X_3,\ldots,X_p$.

Another related term is conditional uncorrelatedness: the conditional distribution of $(X_1,X_2)$ given $(X_3,\ldots,X_p)$ has correlation zero for almost all values of $(X_3,\ldots,X_p)$. There is an important qualitative difference between partial and conditional correlation: the former is a real value, the latter a function of the conditioning variable.
All marginal and conditional distributions of elliptical distributions are again elliptical \citep[][Section~2.6]{Fang1990b}. Hence partial uncorrelatedness implies conditional uncorrelatedness \citep{Baba2004}, and $p_{1,2} = 0$ means linear independence of $X_1$ and $X_2$ after all common effects of $X_3,\ldots,X_p$ have been removed.

However, the only spherical distributions with independent margins are Gaussian distributions, cf.\ \citet[][p.~51]{Bilodeau1999}. Thus contrary to Gaussian graphical models a missing edge in the partial correlation graph of an elliptical distribution can in general not be interpreted as conditional independence. It appears, that by going from the normal to the elliptical model, the gain in generality is paid by a loss in the strength of inference. But this loss is illusory.
From a data modelling perspective the conditional independence interpretation of partial uncorrelatedness under normality is an assumption, not a conclusion. By modelling multivariate data by a joint Gaussian distribution one models the linear dependencies and assumes that there are no other than linear associations among the variables. By fitting an appropriate non-Gaussian model one may still model the linear dependencies and allow non-linear dependencies. Using semiparametric models embodies this idea: the aspects of interest, in our case linear dependencies, are modelled parametrically, whereas other aspects remain unspecified.

Of course, non-normal data need not be elliptical. Any relevant data feature, such as non-linearities, anomalous values, etc., is of potential interest and should be analysed. If the data, say, contains strong quadratic interactions, models that incorporate them should be used, as it is described e.g.\ in \citet[][Section 2.10]{Cox1996}. We address primarily the situation where the essential structure of the data is captured by an ellipse, and the linear interactions are the prominent ones. In any case, a robust analysis of the linear effects, as proposed here, is a suitable starting point of any subsequent tests for potential non-linear effects.

\section{Unconstrained estimation}

An important initial step towards elliptical graphical modelling is the unconstrained estimation of $P$. Unconstrained, since we do not assume a graphical model to hold, not forcing any constraints 
on $P$. We will consider estimators of the type $\hP\!_n = h(\hS\!_n)$, where $\hS\!_n$ is a suitable estimator of a multiple of $S$, therefore start by considering shape estimators $\hS\!_n$.

Let $X_1,\ldots,X_n$ be independent and identically distributed random vectors sampled from an elliptical distribution $F \in \Ee_p(\mu, S)$. Depending on the context, $X_k$ may denote the $k$th $p$-dimensional observation or the $k$th component of the vector $X$. Furthermore let $\X_n = (X_1,\ldots,X_n)^T$ be the $n \times p$ data matrix and $\hS\!_n = \hS\!_n(\X_n)$ be a scatter estimator. The symbol $\hS\!_n$ may have two meanings: a function on the sample space, or as abbreviation for $\hS\!_n(\X_n)$, a random variable. We use the term scatter estimator for any symmetric matrix-valued estimator that gives some information about the spread of the data. 
We call $\hS\!_n$ affine pseudo-equivariant, if it satisfies 
\be \label{ape}
	\hS\!_n(\X_n A^T + 1_n b^T ) \propto A \hS\!_n(\X_n) A^T
\ee
for all $b \in \R^p$ and full rank $A \in \R^{p \times p}$.
This is a generalization  of the strict affine equivariance for scatter estimators, which is obtained if (\ref{ape}) is satisfied with equality. We use this weaker condition since overall scale is irrelevant for partial correlations, and we want to include estimators which only estimate shape, but not scale, and do not satisfy strict affine equivariance. Examples are given in Section~\ref{sec:Examples}.

\citet{Tyler1982} shows that, if a strictly affine equivariant scatter estimator is evaluated at an elliptical distribution, its first two moments, if existent, have a common structure. If the proportionality factor in (\ref{ape}) is not random, the same holds true for pseudo-equivariant scatter estimators. The following condition is therefore natural for affine pseudo-equivariant estimators at elliptical distributions $F$, and many shape estimators have been shown to satisfy it under suitable additional conditions on $F$, see also the examples in Section~\ref{sec:Examples}. 
\begin{assumption} \label{asymS} 
The estimator $\hS\!_n$ converges in probability to $\eta S$ for some $\eta \ge 0$, and there exist $\sigma_1 \ge 0$ and $\sigma_2 \ge - 2 \sigma_1 / p$ such that
\[	
	n^{1/2} \vec\,( \hS\!_n - \eta S ) \to N_{p^2}\left\{ 0, \eta^2 W_S(\sigma_1,\sigma_2)\right\} \hfill
\]
in distribution as $n \to \infty$, where $W_S(\sigma_1,\sigma_2) = 2 \sigma_1 M_p(S \otimes S) + \sigma_2 \vec S (\vec S)^T$.
The scalars $\sigma_1$ and $\sigma_2$ depend on the estimator $\hS\!_n$, the dimension $p$ and the function $g$, but are constant with respect to the shape $S$.
\end{assumption}
We have the following implication for the derived estimators $\hK_n = \hS\!_n^{-1}$ and $\hPhS\!_n = h(\hS\!_n)$. 
\bp \label{Prop1}
If $\hS\!_n$ satisfies Assumption \ref{asymS}, then with $K = S^{-1}$,
\begin{enumerate}[(i)]
\item \label{asymK}
$	\displaystyle n^{1/2} \vec\,( \hK_n - \eta^{-1} K ) \to N_{p^2}\left\{0,\eta^{-2} W_K(\sigma_1,\sigma_2)\right\}$\\[.5ex]
in distribution as $n \to \infty$, where $W_K(\sigma_1,\sigma_2) = 2 \sigma_1 M_p(K \otimes K) + \sigma_2 \vec K (\vec K)^T$, and
\vspace{.5ex}
\item \label{asymP}
$ \displaystyle n^{1/2} \vec\,( \hPhS\!_n - P ) \to N_{p^2}\left\{0, \ 2 \sigma_1 \Gamma(S) M_p (K \otimes K) \Gamma(S)^T\right\}$\\[.3ex]
in distribution  as $n \to \infty$ with $\Gamma(S) = (K_D^{-1/2} \otimes K_D^{-1/2}) + M_p (P \otimes K_D^{-1}) J_p$.
\end{enumerate}
\ep
An important aspect of Proposition \ref{Prop1} is that under ellipticity the asymptotic covariance matrices of partial correlation estimators $\hPhS\!_n$ derived from affine equivariant shape estimators $\hS\!_n$ are proportional to each other.

\section{Constrained estimation}

In this section we treat the estimation of $P$ under a given graphical model $\Ee_p(G)$ specified by the graph $G = (V,E)$, i.e., estimating $P$ with zero-entries. A crude approach is to put the concerning elements in an unconstrained estimate $\hP\!_n$ to zero, but this generally destroys the positive definiteness of the estimate. We define the function 
$h_G: \Ss^+_p \to \Ss^+_p(G): A \mapsto A_G$ by
\be \label{h_G}
\begin{cases}
 \  (A_G)_{i,j} = a_{i,j} 		&	\qquad    \left(\{i,j\} \in E \  \vee  \ i = j\right), \\
 \  (A_G^{-1})_{i,j} = 0     & \qquad    \left(\{i,j\} \notin E,  \ i \neq j\right), \\
\end{cases}
\ee  
where $a_{i,j}$ are the elements of $A$.
A unique and positive definite solution $A_G$ of (\ref{h_G}) exists for any positive definite $A$. The positive definiteness of $A$ is sufficient but not necessary. For details see \citet[][p.~133]{Lauritzen1996}. Since we mainly deal with asymptotics, and shape estimators $\hS\!_n$ are usually almost surely positive definite at continuous distributions for sufficiently large $n$, we assume positive definiteness for simplicity's sake.

Let $G = (V,E)$ be a decomposable graph with cliques $\gamma_1,\ldots,\gamma_c$ ($c \ge 1$), and 
define the sequence $\delta_1,\ldots,\delta_{c-1}$ of successive intersections by
\[
	\delta_k = (\gamma_1 \cup \cdots \cup \gamma_k) \cap \gamma_{k+1} \qquad (k = 1,\ldots, c-1).
\]
We assume that the ordering $\gamma_1,\ldots,\gamma_k$ is such that the cliques form a perfect sequence, i.e., for all $k = 1,\ldots,c-1$ there is a $j \in \{1,\ldots,k\}$ such that $\delta_k \subseteq \gamma_j$. It is always possible to arrange the cliques of a decomposable graph in a perfect sequence \citep[][Prop.~2.17]{Lauritzen1996}.
For notational convenience we let 
\[
	\alpha_k = 
	\begin{cases}
	\gamma_k 			& \qquad (k = 1, \ldots, c), \\
	\delta_{k-c} 	& \qquad (k = c + 1 ,\ldots, 2c - 1),	
	\end{cases}
\qquad \ \qquad 
	\zeta_k = 
	\begin{cases}
	1 			& \qquad (k = 1, \ldots, c), \\
	-1 			& \qquad (k = c + 1 ,\ldots, 2c - 1).	
	\end{cases}
\]
Then $h_G(A)$ allows the following explicit formulation for decomposable G,
\[ 
	h_G(A) = A_G = \left\{ \sum_{k=1}^{2c-1} \zeta_k	\left(A_{\alpha_k,\alpha_k}^{-1} \right)^{(p)}\right\}^{-1} 
	\qquad (A \in \Ss^+_p).
\]
We will use this representation of $h_G$ to further analyse the properties of the estimators $\hS\!_G = h_G(\hS\!_n)$, $\hK\!_G = \hS\!_G^{-1}$ and $\hPSG = h(\hS\!_G)$ for a decomposable graph $G$.
Using the notation $S\!_G = h_G(S)$, $K^{}_G = S\!_G^{-1}$, $P_G = h(S\!_G) \in \R^{p \times p}$ and
\[
	\Omega_G(S) = \sum_{k=1}^{2c-1} \zeta_k	\left(S\!_{\alpha_k,\alpha_k}^{-1} \right)^{(p)} \otimes 	
	\left(S\!_{\alpha_k,\alpha_k}^{-1} \right)^{(p)} \ \in \ \R^{p^2 \times p^2} 
\]
we have the following result about the asymptotic distribution. It is not assumed that the true shape $S$ fits the model $G$.
\bp \label{Prop2}
If $\hS\!_n$ fulfils Assumption \ref{asymS} and $G$ is decomposable, then
\begin{enumerate}[(i)]
\item \label{asymKG}
$\displaystyle n^{1/2} \vec\,( \hK\!_G - \eta^{-1} K_G ) \to N\!_{p^2}\{0, \eta^{-2} W\!_{K_G}(\sigma_1,\sigma_2)\}$ \ \ in distribution \\[.5ex] 
as $n \to \infty$ with \ $W_{K_G}(\sigma_1,\sigma_2) = 2 \sigma_1 M_p \Omega_G(S) (S \otimes S)\Omega_G(S) + \sigma_2 \vec K_G (\vec K_G)^T$,

\smallskip
\item \label{asymSG}
$\displaystyle n^{1/2} \vec\,( \hS\!_G - \eta S\!_G ) \to N\!_{p^2}\{0, \eta^{2} W\!_{S\!_G}(\sigma_1,\sigma_2)\}$ \ \ 
in distribution as $n \to \infty$\\[.5ex]
with $W\!_{S\!_G}(\sigma_1,\sigma_2) = 2 \sigma_1 M_p \left( S\!_G \otimes S\!_G \right) \Omega_G(S) (S \otimes S)\Omega_G(S) \left( S\!_G \otimes S\!_G \right)  +  \sigma_2 \vec S\!_G (\vec S\!_G)^T$, 

\smallskip
\item \label{asymPG}
$\displaystyle n^{1/2} \vec\,( \hP\!_G -  P_G ) \to N\!_{p^2}\{0, W\!_{P_G}(\sigma_1)\}$  \ in distribution as $n \to \infty$, where \\[0.8ex]
$W_{P_G}(\sigma_1) = 2 \sigma_1 \Gamma(S\!_G) M_p \Omega_G(S) (S \otimes S) \Omega_G(S) \Gamma(S\!_G)^T$ with $\Gamma(\cdot)$ as in Proposition \ref{Prop1} (\ref{asymP}).
\end{enumerate}
\ep 
 If the true shape $S$ satisfies the graph $G$, the expressions for the asymptotic variances simplify.
\begin{corollary} \label{Cor1}
If $\hS\!_n$ satisfies Assumption \ref{asymS} with $S^{-1} \in \Ss^+_p(G)$ for a decomposable graph $G$, then the assertions of Proposition \ref{Prop2} are true with 
\begin{enumerate}[(i)]
\item \label{asymKG2}
$W_{K_G}(\sigma_1,\sigma_2) = 2 \sigma_1 M_p \Omega_G(S) \ + \ \sigma_2 \vec K (\vec K)^T$, 
\item \label{asymSG2}
$W\!_{S\!_G}(\sigma_1,\sigma_2) = 2 \sigma_1 M_p (S \otimes S) \Omega_G(S) (S \otimes S) \ + \ \sigma_2 \vec S (\vec S)^T$ and 
\item \label{asymPG2}
$W_{P_G}(\sigma_1) = 2 \sigma_1 \Gamma(S) M_p \Omega_G(S) \Gamma(S)^T$.
\end{enumerate}
\end{corollary}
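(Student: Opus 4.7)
The plan is to obtain the corollary as a specialization of Proposition~\ref{Prop2}: when $K = S^{-1} \in \Ss^+_p(G)$, the matrix $S$ is a fixed point of the retraction $h_G$, and the expressions of Proposition~\ref{Prop2} collapse after a single algebraic identity is verified. Indeed, (\ref{h_G}) characterizes $A_G$ by two uniqueness-giving conditions, so $h_G$ acts as the identity on $\{A \in \Ss^+_p : A^{-1} \in \Ss^+_p(G)\}$; the hypothesis therefore yields $S_G = S$, $K_G = K$ and $P_G = P$. Substituting these equalities into Proposition~\ref{Prop2}(\ref{asymKG})--(\ref{asymPG}) reproduces the stated right-hand sides of the corollary, except that every occurrence of $\Omega_G(S)(S\otimes S)\Omega_G(S)$ must be replaced by $\Omega_G(S)$.

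To justify that replacement I will differentiate $h_G$ at its fixed point $S$. Starting from $h_G(A)^{-1} = \sum_k \zeta_k (A^{-1}_{\alpha_k,\alpha_k})^{(p)}$, applying the chain rule for matrix inversion and using $h_G(S) = S$ gives, for any symmetric $E$,
\[
Dh_G(S)[E] \;=\; \sum_{k=1}^{2c-1} \zeta_k\, S K^{(k)} E\, K^{(k)} S, \qquad K^{(k)} := (S^{-1}_{\alpha_k,\alpha_k})^{(p)}.
\]
Vectorising via $\vec(A E B) = (B^T \otimes A)\vec(E)$ and exploiting the symmetry of $S$ and each $K^{(k)}$ identifies this Jacobian, as a linear map on $\R^{p^2}$, with $Dh_G(S) = (S \otimes S)\,\Omega_G(S)$.

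Since $h_G \circ h_G = h_G$ and $S$ lies in the image of $h_G$, the chain rule forces $Dh_G(S)$ to be idempotent, giving $(S \otimes S)\Omega_G(S)(S \otimes S)\Omega_G(S) = (S \otimes S)\Omega_G(S)$; left-multiplication by $(S \otimes S)^{-1}$ then produces the desired identity $\Omega_G(S)(S \otimes S)\Omega_G(S) = \Omega_G(S)$. This immediately delivers parts~(\ref{asymKG2}) and (\ref{asymPG2}); for part~(\ref{asymSG2}), multiplying the identity by $(S \otimes S)$ on both sides collapses the five-factor expression $(S \otimes S)\Omega_G(S)(S \otimes S)\Omega_G(S)(S \otimes S)$ arising from Proposition~\ref{Prop2}(\ref{asymSG}) down to $(S \otimes S)\Omega_G(S)(S \otimes S)$. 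The only real obstacle is recognising that $\Omega_G(S)(S\otimes S)\Omega_G(S)$ hides an idempotent-projection identity; once that is seen the remaining work is routine Kronecker-product bookkeeping.
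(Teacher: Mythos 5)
Your proof is correct, but it reaches the key identity by a genuinely different route than the paper. Both arguments reduce the corollary to showing $M_p\,\Omega_G(S)(S\otimes S)\Omega_G(S)=M_p\,\Omega_G(S)$ once one observes that $S^{-1}\in\Ss^+_p(G)$ forces $h_G(S)=S$, hence $S_G=S$, $K_G=K$, $P_G=P$. The paper then proves this identity indirectly: it recognises the left-hand side as the asymptotic covariance arising from $h_G(\hat{\Sigma}_n)$ at a normal distribution (Proposition \ref{Prop2} combined with Proposition \ref{asymSigma}) and appeals to the known closed form for the Gaussian maximum likelihood estimator under a decomposable model, formula (5.50) of Lauritzen (1996). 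You instead prove it intrinsically, from the idempotency $h_G\circ h_G=h_G$ together with the fixed-point property $h_G(S)=S$ and the chain rule, using that the Jacobian of $h_G$ at $S$ is $(S\otimes S)\Omega_G(S)$ (which is consistent with the paper's own Lemma \ref{LemmaDerivatives}). Your argument is self-contained, purely algebraic, and makes transparent \emph{why} the quadratic form collapses -- the Jacobian is a projection -- whereas the paper's argument is shorter but leans on an external Gaussian-likelihood result. One small imprecision: since $h_G$ is a function of a \emph{symmetric} argument, its Jacobian carries a trailing $M_p$ (as the paper notes in the appendix preamble), so the idempotency of the derivative only yields $(S\otimes S)\Omega_G(S)(S\otimes S)\Omega_G(S)M_p=(S\otimes S)\Omega_G(S)M_p$, i.e.\ the identity restricted to the symmetric subspace. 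Because $\Omega_G(S)$ and $S\otimes S$ commute with $M_p$, this is exactly $M_p\,\Omega_G(S)(S\otimes S)\Omega_G(S)=M_p\,\Omega_G(S)$, which is all that parts (\ref{asymKG2})--(\ref{asymPG2}) require; the unsymmetrised identity $\Omega_G(S)(S\otimes S)\Omega_G(S)=\Omega_G(S)$ that you state is stronger than what your argument delivers and also stronger than what you need.
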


\section{Testing}
\label{sec:Testing}
An essential tool of most model selection procedures is to test if a model under consideration fits the data and to compare the fit of two nested models.
On the set $\Pi_p = \{ (i,j) \mid i,j = 1,\ldots,p \}$ of the positions of a $p \times p$ matrix we declare a strict ordering $\prec_p$ by 
\[
	(i,j) \prec_p (k,l)\quad  \Leftrightarrow \quad (j-1)p + i < (l-1)p + k, 
	\qquad \quad (i,j,k,l = 1,\ldots,p).  
\]
For any subset $Z = \{ z_1,\ldots,z_q \} \subset \Pi_p$, where $z_k = (i_k, j_k)$ ($k = 1,\ldots,q$) and $z_1 \prec_p \cdots \prec_p z_q$, define the matrix $Q_Z \in \R^{q \times p^2}$ as follows: each line consists of exactly one entry 1 and zeros otherwise. The 1-entry in line $k$ is in column $(i_k-1)p + j_k$. Thus $Q_Z\!\vec \,(A)$ picks the elements of $A$ at positions specified by $Z$ in the order they appear in $\vec \,(A)$. For a graph $G = (V, E)$ with $V = \{ 1,\ldots,p\}$ let
\[
	D(G) = \left\{ (i,j) \, \middle|\, i,j = 1,\ldots,p; \  \{i,j\} \notin E; \ j < i \right\}, 
\]
i.e., the set $D(G)$ gathers all sub-diagonal zero-positions that $G$ enforces on a concentration matrix. Thus $F \in \Ee_p(G)$ is equivalent to $Q_{D(G)} \vec K = 0$.

Now let $G_0 = (V, E_0)$ and $G_1 = (V,E_1)$ be two decomposable graphs with $V$ as above and $E_0 \subsetneq E_1$, or equivalently, $\Ee_p(G_0) \subsetneq \Ee_p(G_1)$. For notational convenience let
\[
	Q_0 = Q_{D(G_0)}, \ \ Q_1 = Q_{D(G_1)}, \ \ Q_{0,1} = Q_{D(G_0)\setminus D(G_1)},
\]	
furthermore
\[
	q_0 = |D(G_0)|, \quad q_1 = |D(G_1)|,   \quad q_{0,1} = q_0 - q_1.
\]
An intuitive approach to testing $G_0$ against the broader model $G_1$ is to reject $G_0$ in favour of $G_1$, if all entries at positions in $D(G_0)\setminus D(G_1)$ of an estimate $\hPhS_{G_1}$ of $P$ under $G_1$ are close to zero. For example, a sum of suitably weighted squared entries of $\hPhS_{G_1}$, such as $\hat{T}\!_n(G_0,G_1)$ below, is a possible test statistic. Let
\[
	R_G(S) = \Gamma(S) M_p \Omega_G(S) \Gamma(S)^T.
\]
For invertible $S$ the matrix $R_{G_1}(S)$ has rank $(p-1)p/2 - q_1$, which can be deduced from the inverse function theorem. Then $Q_{0,1} R_{G_1}(S) Q_{0,1}^T$ is of full rank, and the probability that the Wald-type test statistic 
\[
	\hat{T}\!_n(G_0,G_1) = \frac{n}{2}\left( \vec \hP_{G_1}\right)^T   
															Q_{0,1}^T 
															\left\{ Q_{0,1} R_{G_1}(\hS\!_n) Q_{0,1}^T \right\}^{-1}
															Q_{0,1}
															\vec \hP_{G_1}
\]
exists tends to 1 as $n \to \infty$. Proposition \ref{PropWald} describes the asymptotic behaviour of  $\hat{T}\!_n(G_0,G_1)$ under the null hypothesis that $G_0$ is true, part (\ref{PropWald1}), and under a local alternative, part (\ref{PropWald2}).
\bp \label{PropWald}
Let $G_0$, $G_1$ be as above and $X_1,\ldots,X_n$ independent and identically distributed random variables with $X_1 \sim F \in \Ee_p(\mu,S) \subset \Ee_p(G_0)$. Let $\hS\!_n$ be an affine pseudo-equivariant scatter estimator such that $\hS\!_n(\X_n)$ satisfies Assumption \ref{asymS}.
\begin{enumerate}[(i)]
\item \label{PropWald1}
Then $\hat{T}\!_n(G_0,G_1) \to \sigma_1 \chi^2_{q_{0,1}}$ in distribution as $n \to \infty$.
\item \label{PropWald2}
For $m \in \N$ let $\X_n^{(m)} = (X_1^{(m)},\ldots,X_n^{(m)})^T$ be distributed as $\X_n S\hmh S_m\hh$, thus $X_1^{(m)} \sim \Ee_p(\mu,S_m)$, where the sequence $S_m$ is such that $B = \lim_{m \to \infty} m^{1/2}(S_m - S)$ exists. If, for each $n \in \N$, $\hS\!_n$ is applied to $\X_n^{(n)}$, then, as  $n \to \infty$,  
\be \label{Wald.Alt}
	\hat{T}\!_n(G_0,G_1) \to \sigma_1 \chi^2_{q_{0,1}}\left\{ \sigma_1^{-1}\delta(B,S)\right\}
\ee
in distribution, where
\[
	\delta(B,S) = \frac{1}{2} v^T Q_{0,1}^T \left\{ Q_{0,1} R_{G_1}(S) Q_{0,1}^T\right\}^{-1} Q_{0,1} v,
	\qquad
	v = \Gamma(S) \Omega_{G_1}(S)\vec B.
\] 
\end{enumerate}
\ep
We have some remarks.
\begin{enumerate}[(a)]
\item
We define the non-centrality parameter of the $\chi^2$ distribution $\chi^2_r(\delta) \sim \left(N_r(\mu,I_r)\right)^2$ as $\delta = \mu^T \mu$.
\item
We require $\hS\!_n$ to be affine pseudo-equivariant to ensure that the convergence of $n^{1/2} \{\hS\!_n(\X_n^{(m)}) - \eta S_m\}$ for $n \to \infty$ is uniform in $m$.
\item
In part (\ref{PropWald2}) of Proposition \ref{PropWald} we do not require the sequence of alternatives to lie in the model $G_1$, i.e., that $S\hme_n \in \Ss^+_p(G_1)$, as it is not necessary for the convergence (\ref{Wald.Alt}) to hold. When choosing a model by forward selection one usually compares two wrong models, so it is of interest to know the behaviour of $\hat{T}\!_n(G_0,G_1)$ also if $G_1$ is not true. 
\end{enumerate}
A difficulty with the test in Proposition \ref{PropWald} is the complicated formulation of  $\hat{T}\!_n(G_0,G_1)$. The classical test in Gaussian graphical models is the deviance test. The next proposition gives the analogue for elliptical graphical modelling. It treats parts (\ref{PropWald1}) and (\ref{PropWald2}) of the previous proposition simultaneously. 
\bp \label{PropDeviance}
Let $G_0$, $G_1$ be as above and $\hS\!_n$ a sequence of almost surely positive definite random $p \times p$ matrices, for which $n^{1/2} (\hS\!_n - S)$ converges in distribution to a non-degenerate limit for some $S \in \Ss^+_p$ with $S^{-1} \in \Ss^+_p(G_0)$. Then, as $n \to \infty$, 
\[
	\hat{D}_n(G_0,G_1) = n \left\{ \log \det h_{G_0}(\hS\!_n) - \log \det h_{G_1}(\hS\!_n) \right\}\ \sim \ \hat{T}\!_n(G_0,G_1).
\]
\ep
If the larger model $G_1$ is the saturated model, then Proposition \ref{PropDeviance} is a corollary of Theorem 2 in \citet{Tyler1983}. We extend Tyler's result to two nested models.
\begin{corollary} \label{CorDev}
Both assertions (\ref{PropWald1}) and (\ref{PropWald2}) of Proposition \ref{PropWald} remain true, if $\hat{T}\!_n(G_0,G_1)$ is replaced by  $\hat{D}_n(G_0,G_1)$.
\end{corollary}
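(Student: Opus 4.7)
The corollary is essentially a Slutsky argument: Proposition~\ref{PropDeviance} asserts $\hat D_n(G_0,G_1)-\hat T_n(G_0,G_1)\to 0$ in probability, so any distributional limit of $\hat T_n(G_0,G_1)$ carries over to $\hat D_n(G_0,G_1)$ as soon as the hypotheses of Proposition~\ref{PropDeviance} are verified. The whole task therefore reduces to checking that, in each of the two scenarios of Proposition~\ref{PropWald}, $n^{1/2}(\hS_n-S_0)$ converges in distribution to a non-degenerate limit for some $S_0\in\Ss^+_p$ with $S_0^{-1}\in\Ss^+_p(G_0)$, and then chaining this with the previously established limits.

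For part~(\ref{PropWald1}) this is immediate. Since $F\in\Ee_p(\mu,S)\subset\Ee_p(G_0)$, Assumption~\ref{asymS} gives $n^{1/2}(\hS_n-\eta S)\to N_{p^2}\{0,\eta^2 W_S(\sigma_1,\sigma_2)\}$ in distribution, with $(\eta S)^{-1}=\eta^{-1}K\in\Ss^+_p(G_0)$. Taking $S_0=\eta S$ in Proposition~\ref{PropDeviance} and combining with Proposition~\ref{PropWald}(\ref{PropWald1}) delivers the central $\chi^2$ limit for $\hat D_n(G_0,G_1)$.

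For part~(\ref{PropWald2}) the sample $\X_n^{(n)}$ is drawn from $\Ee_p(\mu,S_n)$ with $S_n\to S$ but $S_n\notin\Ss^+_p(G_0)$ in general, so Assumption~\ref{asymS} does not apply directly to $\hS_n^{(n)}:=\hS_n(\X_n^{(n)})$. Writing
\[
n^{1/2}\bigl(\hS_n^{(n)}-\eta S\bigr)=n^{1/2}\bigl(\hS_n^{(n)}-\eta S_n\bigr)+\eta\, n^{1/2}(S_n-S),
\]
the second summand tends to $\eta B$ by assumption. For the first, affine pseudo-equivariance lets me rewrite $\hS_n^{(n)}$ in terms of $\hS_n(\X_n)$ conjugated by $S_n^{1/2}S^{-1/2}\to I_p$, and remark~(b) following Proposition~\ref{PropWald} records precisely the resulting uniform-in-$m$ convergence of $n^{1/2}\{\hS_n(\X_n^{(m)})-\eta S_m\}$ to the Gaussian limit of Assumption~\ref{asymS}. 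Along the diagonal $m=n$ the first summand therefore still converges in distribution to $N_{p^2}\{0,\eta^2W_S(\sigma_1,\sigma_2)\}$, the whole sum has a non-degenerate Gaussian limit with mean $\eta B$, and $(\eta S)^{-1}\in\Ss^+_p(G_0)$. Hence Proposition~\ref{PropDeviance} applies with $S_0=\eta S$ and gives $\hat D_n(G_0,G_1)\sim\hat T_n(G_0,G_1)$; invoking Proposition~\ref{PropWald}(\ref{PropWald2}) transports the non-central $\chi^2$ limit to $\hat D_n(G_0,G_1)$.

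The one step that is not routine is the uniform convergence of $\hS_n$ along local alternatives needed to certify Proposition~\ref{PropDeviance} in part~(\ref{PropWald2}); this is precisely the role that the pseudo-equivariance assumption plays here, as already flagged in remark~(b). Everything else is a mechanical chaining of Propositions~\ref{PropWald} and~\ref{PropDeviance} via Slutsky's theorem.
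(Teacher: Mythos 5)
Your overall skeleton --- verify the hypotheses of Proposition~\ref{PropDeviance} in each of the two scenarios and then transfer the limit laws of $\hat{T}\!_n(G_0,G_1)$ to $\hat{D}_n(G_0,G_1)$ by Slutsky --- is exactly the paper's, and your treatment of part~(\ref{PropWald1}) is correct. The gap sits in part~(\ref{PropWald2}), precisely in the step you yourself flag as ``not routine''. For a merely pseudo-equivariant estimator the proportionality factor in (\ref{ape}) is an unspecified, possibly random scalar, so pseudo-equivariance alone does not let you control $\hS\!_n(\X_n^{(n)})$ along the diagonal sequence, and remark~(b) is an informal motivating comment, not a proved statement you can cite for this. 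The paper's actual device is to pass to the determinant-normalized estimator $\tilde{S}\!_n = \det(\hS\!_n)^{-1/p}\hS\!_n$, which obeys the exact transformation law (\ref{ape.xi}) with a smooth deterministic $\xi$, and to apply Lemma~\ref{Lemma1} to it; one must also observe that $\hat{D}_n(G_0,G_1)$ is unchanged by this rescaling, because $h_G(cA)=c\,h_G(A)$ so the two $\log\det$ terms acquire the same additive constant $p\log c$. Your proposal mentions neither the rescaling nor this invariance.

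Moreover, your intermediate claim that $n^{1/2}\{\hS\!_n(\X_n^{(n)})-\eta S_n\}$ converges to the centred Gaussian $N_{p^2}\{0,\eta^2 W_S(\sigma_1,\sigma_2)\}$ is false in general: Lemma~\ref{Lemma1} shows that $n^{1/2}\{\tilde{S}\!_n(\X_n^{(n)})-\eta S\}$ has limiting mean $\eta(B+cS)$ with $c=\dsD\xi(I_p)\vec\,(S\hmh B S\hmh)$, so after subtracting $\eta\, n^{1/2}(S_n-S)\to\eta B$ a drift $\eta cS$ in the direction of $S$ remains whenever the equivariance is not exact (e.g.\ for Tyler's estimator with its $\det=1$ normalization). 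This particular error happens not to damage your final conclusion, since Proposition~\ref{PropDeviance} only requires convergence to some non-degenerate limit and the noncentral $\chi^2$ law is imported wholesale from Proposition~\ref{PropWald}(\ref{PropWald2}); but the convergence you need is exactly the content of Lemma~\ref{Lemma1}, and asserting it from remark~(b) rather than deriving it via $\tilde{S}\!_n$ leaves the one non-trivial step of the corollary unproved.
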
 

\section{Examples}
\label{sec:Examples}
There are many affine equivariant, robust estimators, see, for example, \citet{Zuo2006} or \cite{Maronna2006}. The comparison of asymptotic properties of such estimators in the elliptical model reduces to a comparison of the respective values of the scalars $\sigma_1$ and $\sigma_2$.
Of course, the sample covariance matrix is affine equivariant. The following can be found in \citet{Tyler1982}.
\bp \label{asymSigma}
If $X_1,\ldots,X_n$ are independent and identically distributed with distribution $F \in \Ee_p(\mu,S)$ and 
$E||X_1 - \mu||^4 < \infty$, then $\hat{\Sigma}_n = \hat{\Sigma}_n(\X_n)$ fulfils Assumption \ref{asymS} with $\sigma_1 = 1 + \kappa/3$ and $\sigma_2 = \kappa/3$, where $\kappa$ is the excess kurtosis of any component of $X_1$.
\ep
Proposition \ref{asymSigma} indicates the inappropriateness of the sample covariance matrix for heavy-tailed distributions: its asymptotic distribution depends on the kurtosis, which is large at heavy-tailed distributions, rendering the estimator inefficient. An alternative is Tyler's M-estimator, which is defined as the solution $\hat{V}\!_n = \hat{V}\!_n(\X_n)$ of 
\[ 
	\frac{p}{n}\sum_{i=1}^{n} 
	\frac{(X_i - \overline{X}_n)(X_i - \overline{X}_n)^T}
		{(X_i - \overline{X}_n)^T \hat{V}\!_n^{-1}(X_i - \overline{X}_n)}
		= \hat{V}\!_n
\]
that satisfies $\det \hat{V}\!_n = 1$. Existence, uniqueness and asymptotic properties are treated in \citet{Tyler1987}, where the following result is proven.
\bp \label{asymTyler}
If $X_1,\ldots,X_n$ are independent and identically distributed with distribution $F \in \Ee_p(\mu,S)$, furthermore 
$E||X_1 - \mu||^2 < \infty$ and $E||X_1 - \mu||^{-3/2} < \infty$, then $\hat{V}\!_n$ fulfils Assumption \ref{asymS} with $\sigma_1 = 1 + 2/p$ and $\sigma_2 = - 2(1 + 2/p)/p$.
\ep
We have the following remarks.
\begin{enumerate}[(a)]
\item
In Proposition \ref{asymSigma} the scalars $\sigma_1$ and $\sigma_2$ are constant, irrespective of the function $g$, i.e., the Tyler matrix is asymptotically distribution-free within the elliptical model. Hence, when carrying out any of the tests from Section \ref{sec:Testing}, $\sigma_1$ does not need to be estimated.
\item
Tyler's matrix can cope with arbitrarily heavy tails. The assumption of finite second moments is only required for location estimation by the mean. It may be replaced by any root-$n$-consistent location estimator, for instance the Hettmansperger--Randles (2002) median. The inverse moment condition  $E||X_1 - \mu||^{-3/2} < \infty$ is fairly mild: for $p \ge 2$ it is fulfilled if  $g$ has no singularity at $0$.
\item 
The estimator $\hat{V}\!_n$ is affine pseudo-equivariant and gives information only about the shape but none about the scale. Other such estimators are Oja sign and rank covariance matrices \citep{Ollila2003, Ollila2004}.
\end{enumerate}
The popular reweighted minimum covariance determinant estimator \citep{Rousseeuw1987,Croux1999} is highly robust and affine equivariant and has previously been proposed in the context of graphical modelling \citep{Becker2005, Gottard2010}. It is defined as follows. 
A subset $\tau \subset \{1,\ldots,n\}$ of size $h = \lceil t n \rceil$, where $1/2 \le t < 1$ is fixed, is determined such that $\det(\hSigma^{\tau})$ is minimal with
\[
	\hSigma^{\tau} = \frac{1}{h} \sum_{i \in \tau} (X_i - \bar{X}^{\tau}) (X_i - \bar{X}^{\tau})^T,
	\qquad \quad
	\bar{X}^{\tau} = \frac{1}{h} \sum_{i \in \tau} X_i.
\]
The mean $\hmu_{\rm MCD}$ and covariance matrix $\hSigma_{\rm MCD}$ computed from this minimizing subsample are called the raw minimum covariance determinant location and scatter estimates. The scatter part is scaled to achieve consistency for the covariance at the Gaussian distribution. 
Based on the raw estimates a reweighted scatter estimator $\hSigma_{\rm RMCD}$ is computed from the whole sample:
\[
	\hSigma_{\rm RMCD} = \left( \sum_{i =1}^n w_i \right)^{-1} \sum_{i =1}^n w_i (X_i - \hmu_{\rm MCD}) (X_i - \hmu_{\rm MCD})^T,
\]
where $w_i = 1$ if $(X_i - \hmu_{\rm MCD})^T \hSigma_{\rm MCD}^{-1} (X_i - \hmu_{\rm MCD}) < \chi^2_{p,1-\alpha}$ and zero otherwise, and $\alpha$ is a small rejection probability, e.g.\ $\alpha = 0\cdot05$. The reweighted covariance estimate is again scaled, but since this is not necessary for our applications we omit the details.

\section{Numerical example} \label{sec:NumEx}
\begin{figure}[b]
\centering
\includegraphics[scale=0.6]{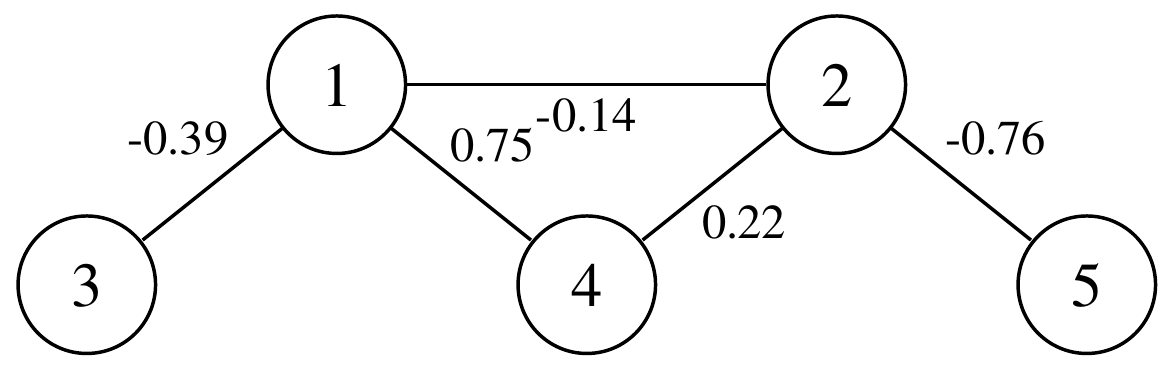}
\caption{Example model, edge labels indicate partial correlations} \label{Figure1}
\end{figure}
We present the results of a simulation study comparing several estimators.
We repeatedly sample 100 independent observations of a 5-dimensional distribution. We use the same shape matrix throughout, with equal diagonal elements and the partial correlation structure represented by the graph in Figure \ref{Figure1}. We let the tail behaviour vary, using the normal distribution and several members of the $t_{\nu,p}$ family to generate heavier tails \citep[][p.~207]{Bilodeau1999}. The index $\nu$ denotes the degrees of freedom. The moments of $t_{\nu,p}$ are finite up to order $\nu - 1$. We may talk of a fixed shape of the $t_{\nu,p}$ distribution, since $g$ is specified. For $\nu \ge 3$, its covariance matrix is $\nu(\nu-2)^{-1} S$, and, for $\nu \ge 5$, the excess kurtosis of each component is $6/(\nu-4)$. Propositions \ref{asymSigma} and \ref{asymTyler} imply that the Tyler matrix is asymptotically more efficient than the sample covariance matrix at $t_{\nu,p}$ if $\nu < p + 4$. 
For each distribution considered we generate 2000 samples, compute the estimates described in Section \ref{sec:Examples} and, based on each estimate, select a model.

We use a simple one-step model selection procedure, that allows us to concentrate on the effects of the different estimators. For each pair $\{i,j\}$ we test the model with all edges but $\{i,j\}$ against the saturated model, and exclude the edge $\{i,j\}$ if the test accepts the smaller model. The significance level $\alpha = 0\cdot05$ is an ad hoc choice. In our simulations the Wald-type test statistic $\hat{T}\!_n$ and the deviance test statistic $\hat{D}_n$ showed a very similar behaviour. Tables \ref{Table1} and \ref{Table2} report the results of the deviance test. 

\begin{table}[t] 
\small
\caption{One-step model selection based on $\hat{\Sigma}$ or $\hat{V}$  \label{Table1}}
\begin{tabular}{c@{\ \quad }c@{ \quad }|@{ \quad }r@{\ \quad }r@{\ \quad }r@{\ \quad }r} 
		distribution 
	& estimator	
	&  \parbox[b]{4.5em}{ \flushright mean edge difference}  
	&  \parbox[b]{6.5em}{ \flushright \% true \\ model found }
	&  \parbox[b]{6.5em}{ \flushright \%  non-edges \\ correctly found }
	&  \parbox[b]{6.5em}{ \flushright \%  \ding{172}${\not\relbar}$\ding{176} \\ correctly found } \\
\hline
	 normal		& $\hat{\Sigma}$         	&  	$1\cdot40$ 			& 21 & 79 & 95\\
 						& $\hat{\Sigma}^*$				&   $1\cdot41$ 			& 20 & 77 & 94\\	
 						& $\hat{V}$								&   $1\cdot65$    	& 14 & 78 & 94\\[0.5ex] 	
 	$t_{25}$	&	$\hat{\Sigma}$        	&  	$1\cdot44$ 			& 20 & 75	& 93\\
						& $\hat{\Sigma}^*$				&   $1\cdot44$			& 19 & 78	& 94\\
  					& $\hat{V}$   			 			&  	$1\cdot64$ 			& 14 & 78	& 94\\[0.5ex] 
	$t_{12}$	&	$\hat{\Sigma}$        	&  	$1\cdot51$ 			& 20 & 71	& 92\\
						& $\hat{\Sigma}^*$				&   $1\cdot51$ 			& 18 & 79 & 94\\
  					& $\hat{V}$   			 			&  	$1\cdot66$  		& 13 & 79 & 94\\[0.5ex] 
	$t_8$	    &	$\hat{\Sigma}$        	&  	$1\cdot65$ 			& 17 & 64	& 89\\
						& $\hat{\Sigma}^*$				&   $1\cdot65$			& 15 & 76	& 93\\
  					& $\hat{V}$   			 			&  	$1\cdot62$ 			& 13 & 79	& 94\\[0.5ex] 
	$t_5$	    &	$\hat{\Sigma}$        	&  	$1\cdot90$ 			& 14 & 51	& 84\\
						& $\hat{\Sigma}^*$				&   $1\cdot87$ 			& 10 & 74 & 93\\
  					& $\hat{V}$   			 			&  	$1\cdot63$  		& 14 & 78 & 94\\[0.5ex] 
  $t_3$			& $\hat{\Sigma}$ 					&   $2\cdot49$ 			&  8 & 29	& 72\\
						& $\hat{\Sigma}^*$				&   $2\cdot28$		  &  7 & 71 & 91\\
						& $\hat{V}$								&   $1\cdot65$ 			&  14& 78 &	95\\ 
\end{tabular}

\flushleft
$^*$ test statistic adjusted by estimated kurtosis
\end{table}

The main criterion by which we measure the goodness of the model selection is the mean edge difference, i.e., the average number of edges that are wrongly specified in the selected model, whether an existing edge was rejected or an absent edge was included. 
Although less suited as a performance criterion it is also of interest to know, how often the true model is found. 
Any model selection procedure that is based on testing for zero parameters aims at controlling the probability of correctly specifying the non-edges. We may also look at how often a single non-edge is correctly specified. This should be true in about 95\% of the cases, since a sample size of 100 seems large enough to expect some validity of the asymptotics in this setting.

In Table \ref{Table1} we compare the sample covariance matrix $\hSigma_n$ to Tyler's estimator $\hat{V}\!_n$ with the Hettmansperger--Randles median as location estimator. The benchmark is traditional graphical modelling, i.e., the performance of $\hSigma_n$ at the normal distribution. The classical deviance test deteriorates, if we move away from normality. We assume only ellipticity of the distribution and hence adjust the $\hSigma_n$-based test statistic by an estimate of $\sigma_1$, which is here the average of the sample kurtoses of all component divided by 3, cf.\ Proposition \ref{asymSigma}. This repairs the test, to some extent even in the case of the $t_3$-distribution, but does not necessarily give a better model selection. The estimator $\hSigma_n$ is inefficient under heavy tails, resulting in a test with low power.
As for Tyler's estimator, we recognize the asymptotic properties: the $\chi^2$-quantile fits, it outperforms $\hSigma_n$ at $t_{\nu}$-distributions with $\nu < 9$, and it is distribution-free within the elliptical model. 

In Table \ref{Table2} we examine if the same robustness against heavy-tailedness may be achieved by equally simple means using other robust estimators and, in particular, how the previous proposals of robust Gaussian graphical modelling, the reweighted minimum covariance determinant estimator and the Miyamura--Kano estimator, perform in this situation. 
Outlier-robust estimators interpret the bulk of the data as approximately normal and the observations in the tails as faulty outliers, that should be downweighted or rejected. Although there are some common aspects, this is in principle a different situation, and it is not surprising that both estimators do not meet the performance of Tyler's estimator at heavy-tailed distributions. Also, we did not estimate $\sigma_1$ from the data, but used its value for the normal distribution. For the reweighted minimum covariance determinant the values can be found in \citet{Croux1999}. But even in the Gaussian case, when $\sigma_1$ is chosen asymptotically correct, the asymptotic $\chi^2$-distribution does not seem to provide a sensible approximation. This small-sample inefficiency of the reweighted minimum covariance determinant estimator is usually taken care of by multiplying the test statistic by a correction factor, which has to be determined numerically \citep{Croux1999}. Using such an appropriate finite-sample value of $\sigma_1$ repairs the test, but again, it does not improve the model selection in our example. 
For the Miyamura--Kano proposal we note that they devise an alternative way of constrained estimation, but propose a very slow algorithm, which makes it, at least in the R implementation we used, unfeasible in larger dimensions. There is a tuning parameter to choose, which was set to $0\cdot3$ in our experiment, following the recommendation of the authors.
All calculations were done in R $2\cdot9\cdot1$, employing routines from the packages \verb"mvtnorm"\ , \verb"ggm"\ , \verb"ICSNP"\ , \verb"rrcov"\ and  \verb"rggm"\ . 
\begin{table}[t] 
\small
\caption{One-step model selection based on robust estimators\label{Table2}}
\begin{tabular}{c@{ \quad }c@{ \quad }|@{ \quad }r@{ \quad }r@{ \quad }r@{ \quad }r} 
		distribution 
	& estimator	
	&  \parbox[b]{4.5em}{ \flushright mean edge difference}  
	&  \parbox[b]{6.5em}{ \flushright \% true \\ model found }
	&  \parbox[b]{6.5em}{ \flushright \%  non-edges \\ correctly found }
	&  \parbox[b]{6.5em}{ \flushright \%  \ding{172}${\not\relbar}$\ding{176} \\ correctly found } \\
\hline
	 normal		& RMCD $0\cdot5$								&   $2\cdot05$ 				 	& 11 & 54 & 85\\	
 						& RMCD $0\cdot5^{**}$						&   $2\cdot06$    		 	&  5 & 81 & 94\\ 	
 						& RMCD $0\cdot75$								&   $1\cdot66$ 				 	& 15 & 72 & 92\\	
 						& RMCD $0\cdot75^{**}$					&   $1\cdot69$    		 	& 13 & 80 & 94\\ 	
 						& M--K$^+$											&   $1\cdot61$ 				 	& 14 & 81 & 95\\[0.5ex]	
	$t_3$		
						& RMCD $0\cdot5$								&   $2\cdot18$ 				 	&  9 & 45 & 82\\	
 						& RMCD $0\cdot5^{**}$						&   $2\cdot13$    		 	&  5 & 76 & 93\\ 	
 						& RMCD $0\cdot75$								&   $2\cdot02$ 				 	& 11 & 51 & 85\\	
 						& RMCD $0\cdot75^{**}$					&   $1\cdot96$    		 	& 10 & 61 & 89\\ 	
 						& M--K$^+$											&   $1\cdot82$ 				 	& 12 & 67 & 91\\	 
\end{tabular}

\flushleft
$^{**}$ with finite-sample correction; $^+$ \citet{Miyamura2006}
\end{table}

\section{Conclusion}
As a very simple and efficient technique to safeguard graphical modelling of continuous data against the impact of heavy tails, non-normality in general and, to some degree, also faulty outliers we recommend the use of Tyler's estimator in place of the empirical covariance matrix. The gain in robustness comes at a very moderate loss in efficiency, which becomes smaller with increasing dimension, and a justifiable increase in computing time. \citet{Vogel2010b} report average computing times on a 2.83 GHz Intel Core2 CPU for $n = 200$ and $p = 50$ of less than a second for the Tyler matrix, compared to less than three seconds for the reweighted minimum covariance determinant estimator.
Moreover, our approach allows the use of any affine pseudo-equivariant, root-$n$-consistent estimator $\hS\!_n$ in an analogous way. Assumption \ref{asymS} is the important prerequisite on $\hS\!_n$, and our results also apply to estimators that are asymptotically affine equivariant, like the rank-based estimation technique of \citet{Hallin2006}.

A problem that has not been addressed in this article is the accuracy of the asymptotic approximations for small to moderate sample sizes, in particular, to what extent it depends upon the ratio $p/n$.
This question splits into two parts. The first is an evaluation of the finite-sample properties of the affine pseudo-equivariant scatter estimators. These may be very different and do not allow a unified treatment. 
Very little seems to be known theoretically, either on the exact distribution of most robust scatter estimators or the rate of convergence to the Gaussian limit. 
However, there is strong empirical evidence that 
Tyler's estimator has excellent small-sample properties. In all our simulations the difference in the empirical distributions of any univariate function of the sample covariance matrix $\hat{\Sigma}_n$ at normality and the Tyler matrix $\hat{V}\!_n$ at any elliptical distribution, is fully expressed by the asymptotic scaling factor $1 + 2/p$, see also \citet[][Figure 2]{Vogel2010b}. Moreover, it is known that $\hat{V}\!_n$ behaves similarly to $\hat{\Sigma}_n$ when $p$ and $n$ grow large simultaneously \citep{Duembgen1998}. 
The second task is then, given the small-sample properties of the estimators, to assess the accuracy of the asymptotic $\chi^2$ distributions of the tests. This question is of relevance also in classical graphical modelling, where it has been noted that the deviance test statistic may substantially differ from its $\chi^2$ limit for small $n$. Improved small-sample approximations have been proposed \citep{Porteous1985, Porteous1989}, but also the exact distribution of the deviance test statistic is known for decomposable models, cf.\ \citet[][Sections 5.2.2 and 5.3.3]{Lauritzen1996}. Our simulations indicate that finite-sample correction techniques used in Gaussian graphical modelling may be put to good use also under ellipticity by applying it in an analogous way to Tyler's estimator.

The main limitation of the affine equivariant approach is that it does not provide a solution in the $p > n$ situation or allow a simple transfer of standard techniques, like regularization, that are used in Gaussian graphical modelling. Any affine equivariant, robust estimator requires more than $p + 1$ data points, because the only affine equivariant scatter estimator in the $p + 1 > n$ situation is the sample covariance estimator \citep{Tyler2010}. Dropping the affine equivariance property is inevitable for robust, high-dimensional graphical modelling.
 
\section*{Acknowledgement} This research was supported by the German Research Foundation. The authors gratefully acknowledge the assistance of Alexander D\"urre in preparing the figure and the simulations and thank the referees, the associate editor and the editor for their helpful comments and suggestions.

\appendix
\section{Proofs}

The proofs repeatedly apply the delta method to functions mapping matrices to matrices. We define the derivative of such a function, say, $g: \Rpp \to \Rpp$ at point $X$ as the derivative of $\vec\, g(X)$ with respect to $\vec\, (X)$ and denote its Jacobian at point $X$, which is of size $p^2 \times p^2$, by  $\dsD g(X)$. The symmetry of the argument poses a technical difficulty: there are $p(p+1)/2$ rather than $p^2$ variables, and the function $g$ must be viewed as a function from $\R^{p(p+1)/2}$ to $\Rpp$ in order to define a derivative. To deal with this issue we compute the Jacobian of $g$ interpreted as a function from $\Rpp$ to $\Rpp$ and post-multiply it by $M_p$. This is justified by the chain rule applied to $g = g_2 \circ g_1$, where $g_1$ duplicates the off-diagonal elements and $g_2: \Rpp \to \Rpp$. The derivatives below contain the right-multiplied $M_p$ depending on whether we view the function as defined on $\Ss_p$ or on $\Rpp$.
The textbook \citet{Magnus1999} covers most of the tools of the proofs, in particular calculation rules concerning the $\vec$ operator, the Kronecker product and derivatives of matrix functions. 
We repeatedly use the following without reference.
\begin{eqnarray*}
	(A \otimes B)(C \otimes D) = AC \otimes BD,  
	& & \
	(\vec A)^T \vec B = \trace(A^T B),
	\quad 
	\vec\,(ABC) = (C^T \otimes A) \vec B, \\
	M_p = M_p^2, & &  \quad M_p (A \otimes A) M_p = M_p (A \otimes A) = (A \otimes A) M_p, \quad  
\end{eqnarray*}
for matrices $A, B, C, D \in \Rpp$ \citep[][pp.~28, 30, 31]{Magnus1999}. Let $\iota: A \mapsto A\hme$ denote matrix inversion. Its Jacobian matrix is \citep[][p.~184]{Magnus1999} 
\[ 
	\dsD \iota(A) = -(A^T)^{-1} \otimes A^{-1}.
\]
\begin{proof}[Proof of Proposition \ref{Prop1}]
Part (\ref{asymK}) follows by straightforward calculations from the delta method. 

Part (\ref{asymP}): We have $\hPhS_n = \th(\hK_n)$ with $\th: A \mapsto - A_D\hmh A A_D\hmh$. We need to compute the derivative of $\th$ in order to apply the delta method. We start by considering $\th_0: A \mapsto A_D\hmh$. Its Jacobian matrix $\dsD \th_0(A) = -\frac{1}{2} \left\{A_D\hmh \otimes A_D\hme \right\}J_p$ 
is obtained by elementwise differentiation. Applying the multiplication rule to $\th(A) = - \th_0(A) A \th_0(A)$ yields
\be \label{th}
	\dsD \th(A) = - M_p \left\{\th(A) \otimes A_D\hme\right\}J_p \ - \ A_D\hmh \otimes A_D\hmh.
\ee
By the delta method,
\[
	n^{1/2}\vec\left(\hPhS_n - P \right) = n^{1/2}\vec\left\{ \th(\hK) - \th(\eta\hme K)\right\}
\]
converges in distribution to a $p^2$-dimensional normal distribution with mean zero and covariance matrix
\[
	\dsD\th(\eta\hme K) \eta^{-2} W_K(\sigma_1,\sigma_2) \left\{\dsD\th(\eta\hme K)\right\}^T,
\]
which reduces to the expression given in Proposition \ref{Prop1}. In particular, $\sigma_2$ vanishes, since $\dsD\th(K) \vec K = 0$. This is generally true for scale-invariant function $\th$.
\end{proof}
\begin{proof}[Proof of Proposition \ref{Prop2}]

Part (\ref{asymKG}): Since $K_G = \th_G(S)$ with 
\[
	\th_G: A \mapsto \sum_{k=1}^{2c-1} \zeta_k	\left(A_{\alpha_k,\alpha_k}^{-1} \right)^{(p)}
\]
we want to compute the derivative of $\th_G$. Let $\th_{\alpha}: A \mapsto (A_{\alpha,\alpha}\hme)^{(p)}$ for any subset $\alpha \subset \{1,\ldots,p\}$. The mapping $\th_{\alpha}$ is a composition of $(\cdot)_{\alpha,\alpha}$, $\iota$ and $(\cdot)^{(p)}$. We obtain by the chain rule
\[
	\dsD \th_{\alpha}(A) = - \left\{(A_{\alpha,\alpha}^{-1})^T \right\}^{(p)} \otimes \left(A_{\alpha,\alpha}^{-1} \right)^{(p)},
\qquad 
\quad
	\dsD \th_G(A) = - \sum_{k=1}^{2c-1} \zeta_k \left\{(A_{\alpha_k,\alpha_k}^{-1})^T \right\}^{(p)} \otimes \left(A_{\alpha_k,\alpha_k}^{-1} \right)^{(p)}.
\]
Then $\eta^{-2}W\!_{K_G} (\sigma_1,\sigma_2) = \dsD \th_G(\eta S) \eta^{2}W_S(\sigma_1,\sigma_2) \left\{ \dsD \th_G(\eta S) \right\}^T$ is shown to have the form given in Proposition \ref{Prop2} (\ref{asymKG}) by noting that
$\dsD \th_G(S) \vec S = \vec K_G$. This holds true because 
\[ 
 \left(S_{\alpha,\alpha}^{-1} \right)^{(p)} S \left(S_{\alpha,\alpha}^{-1} \right)^{(p)} = \left(S_{\alpha,\alpha}^{-1} \right)^{(p)},
\]
which is a consequence of the inversion formula for partitioned matrices.

Part (\ref{asymSG}): Applying the delta method we have to left- and right-multiply $W\!_{K_G}$ by the Jacobian of $\iota$ evaluated at $K_G$. Note that $(S_G \otimes S_G) \vec K_G = \vec S_G$.

Part (\ref{asymPG}): We left- and right-multiply $W\!_{K_G}$ by the Jacobian of $\th$, given in (\ref{th}), evaluated at $K_G$.
\end{proof}
\begin{proof}[Proof of Corollary \ref{Cor1}]
Let $S \in \Ss_p^+$ be such that $h_G(S) = S$ and write short $\Omega$ for $\Omega_G(S)$. 
It suffices to show that $2	M_p \Omega(S \otimes S)\Omega = 2 M_p\Omega$.
Proposition \ref{Prop2} (\ref{asymSG}) in connection with Proposition \ref{asymSigma} identifies the left-hand side as the asymptotic covariance of $h_G(\hat{\Sigma})$, where $\hat{\Sigma}$ is the sample covariance matrix, at the normal distribution with covariance $S$. Formula (5.50) in \citet{Lauritzen1996} identifies the same quantity as the right-hand side.
\end{proof}%
In the proofs of Proposition \ref{PropWald} and Corollary \ref{CorDev} we use the following lemma.
\bl \label{Lemma1}
Let $\X_n$ and $\X_n^{(m)}$, $m, n \in \N$, be as in Proposition \ref{PropWald} and $\hS\!_n$ a shape estimator such that $\hS\!_n(\X_n)$ satisfies Assumption \ref{asymS}. Assume furthermore that there is a continuously differentiable function $\xi\!: \R^{p \times p} \to \R$ with $\xi(I_p) = 1$ such that $\hS\!_n$ satisfies 
\be \label{ape.xi}
	\hS\!_n(\X_n A^T + 1_n b^T ) = \xi(AA^T) A \hS\!_n(\X_n) A^T
\ee 
for any data matrix $\X_n \in \R^{n \times p}$, $b \in \R^p$ and full rank matrix $A \in \R^{p \times p}$. Then
\[
	n^{1/2} \vec\,\left\{\hS\!_n(\X_n^{(n)}) - \eta S \right\} \, \to \, N_{p^2}\left\{\eta(B + c S), \, \eta^{2} W_S(\sigma_1,\sigma_2) \right\}
\]
in distribution as $n \to \infty$, where $B$ is as in Proposition \ref{PropWald} and $c = \dsD\xi(I_p)\vec\,(S\hmh B S\hmh)$.
\el
The proof of Lemma \ref{Lemma1} follows by straightforward calculations and is omitted. The constant
$c$ is identified by means of the first order Taylor expansion of $\xi(S_n\hh S\hme S_n\hh)$ around $I_p$.
\begin{proof}[Proof of Proposition \ref{PropWald}]
Part (\ref{PropWald1}) follows by standard arguments from the asymptotic normality of the estimator $\hP_{G_1}$. 
For any affine pseudo-equivariant estimator $\hS\!_n$ the rescaled estimator 
$\tilde{S}\!_n = \det(\hS\!_n)^{-1/p} \hS\!_n$ satisfies (\ref{ape.xi}), and the value of the test statistic $\hT_n(G_0,G_1)$ is the same, if computed from $\hS\!_n$ or $\tilde{S}\!_n$. Applying Lemma \ref{Lemma1} to $\tilde{S}\!_n$ we deduce part (\ref{PropWald2}) for analogously to part (\ref{PropWald1}).
\end{proof}
Towards the proof of Proposition \ref{PropDeviance} we state Lemmas \ref{LemmaMinimizer} to \ref{LemmaDerivatives}. For $A \in \Ss^+_p$ let $f_A: \Ss^+_p \to \R$: $f_A(B) =  \log \det B + \trace(B^{-1} A)$.	
From the theory of Gaussian graphical models we know that for any graph $G$ and $A \in \Ss^+_p$ the matrix $A_G = h_G(A)$ is the unique solution of the constrained optimization problem 
\be \label{cop1}
   \mbox{minimize } f_A(B) \qquad \mbox{ subject to } \quad  Q_{D(G)}\vec\,h(B) = 0, \ B \in \Ss^+_p,
\ee
because $A_G$ is the maximum likelihood estimate of the covariance matrix under the model $G$ at the normal distribution, if $A$ is the observed sample covariance, cf.\ \citet[][p.~133]{Lauritzen1996}.
Now with the notation of Section \ref{sec:Testing} let $H_0(\cdot) = Q_{D(G_0)}\vec\, h(\cdot)$, $H_1(\cdot) = Q_{D(G_1)}\vec\, h(\cdot)$ and $H_{0,1}(\cdot) = Q_{D(G_0)\setminus D(G_1)}\vec\, h(\cdot)$.
\begin{lemma} \label{LemmaMinimizer}
$A_{G_0} = h_{G_0}(A)$ is a solution of the constrained optimization problem
\be \label{cop2}
   \mbox{minimize } f_{A_{G_1}}\{h_{G_1}(C)\} \qquad \mbox{ subject to } \quad  H_{0,1}\{h_{G_1}(C)\} = 0, \ C \in \Ss^+_p.
\ee
\end{lemma}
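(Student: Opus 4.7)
The plan is to reduce the two-layer optimization in (\ref{cop2}) to the classical one-layer Gaussian maximum-likelihood problem (\ref{cop1}) on $\Ss_p^+(G_0)$.

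The first step is to rewrite the feasible set in terms of $B=h_{G_1}(C)$. As $C$ ranges over $\Ss_p^+$, the matrix $B=h_{G_1}(C)$ ranges over $\Ss_p^+(G_1)$, because $h_{G_1}$ maps into $\Ss_p^+(G_1)$ and is the identity on that set. For $i \neq j$ the off-diagonal entry $h(B)_{i,j}$ is a positive multiple of $(B^{-1})_{i,j}$, so the constraint $H_{0,1}(h_{G_1}(C)) = Q_{D(G_0)\setminus D(G_1)}\vec\,h(B)=0$ is equivalent to $B^{-1}$ vanishing on $D(G_0)\setminus D(G_1)$. Combined with $B\in\Ss_p^+(G_1)$, this is exactly $B\in\Ss_p^+(G_0)$. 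Hence (\ref{cop2}) becomes the problem of minimizing $f_{A_{G_1}}(B)$ over $B\in\Ss_p^+(G_0)$.

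The second step is to invoke (\ref{cop1}) with $A$ replaced by $A_{G_1}$ and $G$ by $G_0$: the unique minimizer is $h_{G_0}(A_{G_1})$. To finish, I would check that $h_{G_0}(A_{G_1})=A_{G_0}$. Since $A_{G_1}$ agrees with $A$ on the diagonal and on the edges of $E_1 \supseteq E_0$, the matrix $A_{G_0}$ satisfies the two defining equations in (\ref{h_G}) also when the input $A$ is replaced by $A_{G_1}$, and the uniqueness statement following (\ref{h_G}) then gives $h_{G_0}(A_{G_1})=A_{G_0}$. Finally, $A_{G_0}\in\Ss_p^+(G_0)\subseteq\Ss_p^+(G_1)$ so $h_{G_1}(A_{G_0})=A_{G_0}$, and $C=A_{G_0}$ indeed attains the minimum.

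I do not anticipate a serious obstacle. The only care needed is in the first step, where one must correctly translate the two constraint layers --- membership of $B$ in the image of $h_{G_1}$ together with $H_{0,1}(B)=0$ --- into the single zero-pattern condition $B\in\Ss_p^+(G_0)$; once that reduction is made, the lemma is a routine combination of (\ref{cop1}) with the uniqueness of solutions to the defining system (\ref{h_G}).
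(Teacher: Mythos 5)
Your proof is correct and takes essentially the same route as the paper's: both arguments identify the feasible set of (\ref{cop2}), namely $\{h_{G_1}(C): H_{0,1}\{h_{G_1}(C)\}=0,\ C\in\Ss^+_p\}$, with $\{B\in\Ss^+_p : H_0(B)=0\}$, and then invoke the classical characterization (\ref{cop1}) together with the fact that $h_{G_0}(A_{G_1})=h_{G_0}(A)=A_{G_0}$ for nested graphs. One immaterial slip: $h(B)_{i,j}$ is a \emph{negative}, not positive, multiple of $(B^{-1})_{i,j}$, though only its being a nonzero multiple is needed for the zero-pattern equivalence.
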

\bop
By (\ref{cop1}) and (\ref{h_G}), $A_{G_0}$ uniquely solves the constrained optimization problem
\be \label{cop3}
 	\mbox{minimize } f_{A_{G_1}}(B)  \qquad \mbox{ subject to } \quad H_0(B) = 0, \ B \in \Ss^+_p. 
\ee
The restriction $H_0(B) = 0$ is equivalent to $H_1(B) = 0 \wedge H_{0,1}(B) = 0$, 
and any matrix $B$ with $H_1(B) = 0$ can be written as $B = h_{G_1}(C)$ for some  $C \in \Ss^+_p$. Thus $\left\{ B \ \middle| \ H_0(B) = 0, B \in \Ss^+_p \right\}$  and  
$	\Cc = \left\{ B = h_{G_1}(C) \ \middle| \ H_{0,1}\{h_{G_1}(C)\} = 0, C \in \Ss^+_p \right\}$ are equal, and so are the solution sets of the constrained optimization problems (\ref{cop3}) and
\be \label{cop4}
  	\mbox{minimize }	f_{A_{G_1}}(B)  \qquad \mbox{ subject to } \quad  B \in \Cc. 
\ee
Thus $A_{G_0}$ uniquely solves (\ref{cop4}), and all matrices $C \in \Ss^+_p$ with $h_{G_1}(C) = A_{G_0}$, among them $A_{G_0}$, solve (\ref{cop2}).
\eop
The next two lemmas are stated without proof. Expressions (\ref{Derivative2}) can be deduced from the proofs of Propositions \ref{Prop1} and \ref{Prop2}, and (\ref{Derivative1}) can be assembled from the derivatives given in \citet[][pp.~178,179]{Magnus1999}.
\begin{lemma} \label{LemmaAE}
Let $H: \Ss_p \to \R^q$ be continuously differentiable and $G_0$, $G_1$ as in Section \ref{sec:Testing}. Let furthermore $\hS\!_n$ be a sequence of almost surely positive definite random $p \times p$ matrices, for which $n^{1/2} (\hS\!_n - S)$ converges in distribution for some $S \in \Ss^+_p$ with $S^{-1} \in \Ss^+_p(G_0)$. Then for $n \to \infty$
\[
	n^{1/2} \left\{ H(\hS_{G_0}) -  H(\hS_{G_1}) \right\} \ \sim \ n^{1/2} \, \dsD H(\hS_{G_0}) \vec\,\left(\hS_{G_0} - \hS_{G_1}\right). 
\]
\end{lemma}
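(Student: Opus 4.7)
The plan is a standard delta-method/Taylor-expansion argument, exploiting the fact that the assumption $S^{-1}\in\Ss^+_p(G_0)\subseteq\Ss^+_p(G_1)$ (the inclusion holds because $E_0\subsetneq E_1$ forces more zero constraints) places the true shape inside both constrained models, so that $h_{G_0}(S)=S=h_{G_1}(S)$. Consequently both $\hS_{G_0}$ and $\hS_{G_1}$ target the same point $S$, and the difference between them is small enough that the linearization of $H$ around a common nearby point is sharp.

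First I would note that $h_{G_0}$ and $h_{G_1}$ are continuously differentiable on $\Ss^+_p$: each $h_G$ is the composition of the (smooth) extraction maps $(\cdot)_{\alpha_k,\alpha_k}$, matrix inversion $\iota$, the embedding $(\cdot)^{(p)}$, the signed sum, and a final $\iota$; all of these are smooth at $S$ since $S$ and each principal submatrix $S_{\alpha_k,\alpha_k}$ are positive definite. Hence the delta method applied to $\hS_n\to S$ yields
\[
n^{1/2}(\hS_{G_k}-S) \;=\; \dsD h_{G_k}(S)\,n^{1/2}(\hS_n-S) + o_p(1) \qquad (k=0,1),
\]
so that $\hS_{G_k}\to S$ in probability and, in particular, $\hS_{G_0}-\hS_{G_1} = O_p(n^{-1/2})$.

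Next I would apply Taylor's theorem to $H$, expanding around $\hS_{G_1}$. Since $H$ is continuously differentiable and $\hS_{G_1}\to S$ in probability, there is a (random) remainder $R_n$ with
\[
H(\hS_{G_0}) - H(\hS_{G_1}) \;=\; \dsD H(\hS_{G_1})\,\vec(\hS_{G_0}-\hS_{G_1}) + R_n,
\qquad
\|R_n\| = o\!\left(\|\hS_{G_0}-\hS_{G_1}\|\right).
\]
Combining this with $\|\hS_{G_0}-\hS_{G_1}\| = O_p(n^{-1/2})$ gives $n^{1/2}R_n = o_p(1)$, so
\[
n^{1/2}\{H(\hS_{G_0})-H(\hS_{G_1})\} \;=\; \dsD H(\hS_{G_1})\,n^{1/2}\vec(\hS_{G_0}-\hS_{G_1}) + o_p(1).
\]

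Finally, I would swap the evaluation point of the Jacobian from $\hS_{G_1}$ to $\hS_{G_0}$. Both $\dsD H(\hS_{G_1})$ and $\dsD H(\hS_{G_0})$ converge in probability to $\dsD H(S)$ by the continuity of $\dsD H$ and the consistency established above, while $n^{1/2}\vec(\hS_{G_0}-\hS_{G_1})=O_p(1)$ by the first step. By Slutsky's theorem, $\{\dsD H(\hS_{G_1})-\dsD H(\hS_{G_0})\}\,n^{1/2}\vec(\hS_{G_0}-\hS_{G_1})=o_p(1)$, yielding the claimed asymptotic equivalence. The only real subtlety is the joint use of two Slutsky-type replacements (the Taylor remainder and the Jacobian substitution), both of which hinge on the $\sqrt{n}$-boundedness of $\hS_{G_0}-\hS_{G_1}$; this is the step I would make sure is stated cleanly, as it is what would fail without the hypothesis $S^{-1}\in\Ss^+_p(G_0)$.
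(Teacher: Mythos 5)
Your proposal is correct: the key observations -- that $S^{-1}\in\Ss^+_p(G_0)\subseteq\Ss^+_p(G_1)$ forces $h_{G_0}(S)=h_{G_1}(S)=S$, so that $\hS_{G_0}-\hS_{G_1}=O_p(n^{-1/2})$ by the delta method, after which a first-order Taylor expansion of $H$ and a Slutsky substitution of the Jacobian's evaluation point give the claim -- are exactly what is needed, and your handling of the remainder (via the $\sqrt{n}$-boundedness of $\hS_{G_0}-\hS_{G_1}$ and continuity of $\dsD H$ near $S$) is sound. The paper states this lemma without proof, so there is no argument to compare against, but yours is the standard one the authors evidently have in mind.
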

\begin{lemma} \label{LemmaDerivatives}
For $A, B \in \Ss^+_p$,
\begin{eqnarray}
\dsD f_A(B)  &  =  & \vec\,(B - A)^T (B^{-1} \otimes B^{-1}) M_p, \label{Derivative1} \\
\dsD h_{G}(B) &  =  & \left\{h_G(B) \otimes h_G(B) \right\} \Omega_G(B) M_p, \label{Derivative2} \quad
\dsD H_{0,1}(B)   =   Q_{0,1} \Gamma(B) \left(B^{-1} \otimes B^{-1}\right) M_p. 
\end{eqnarray}
\end{lemma}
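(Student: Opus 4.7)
All three identities reduce to a chain-rule combination of derivative facts that either appear already in the proofs of Propositions \ref{Prop1}--\ref{Prop2} or are standard matrix-calculus identities from \citet{Magnus1999}. I would treat each formula separately, always first computing the Jacobian on all of $\Rpp$ and then post-multiplying by $M_p$ to account for symmetry, exactly as explained in the opening paragraph of the appendix.

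For (\ref{Derivative1}): write $f_A(B) = \log\det B + \trace(B\hme A) = \log\det B + (\vec A)^T \vec(B\hme)$. From \citet[][pp.~178,179]{Magnus1999} one has $d(\log\det B)/d(\vec B)^T = \vec(B\hme)^T$ and $d(\vec B\hme)/d(\vec B)^T = -(B\hme\otimes B\hme)$ on $\Ss^+_p$. Combining,
\[
\dsD f_A(B) \;=\; \vec(B\hme)^T - (\vec A)^T(B\hme\otimes B\hme).
\]
Rewriting $\vec(B\hme) = \vec(B\hme B B\hme) = (B\hme\otimes B\hme)\vec B$ gives $\vec(B\hme)^T = (\vec B)^T(B\hme\otimes B\hme)$, so the two terms collapse to $\vec(B-A)^T(B\hme\otimes B\hme)$. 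Post-multiplying by $M_p$ yields the claim.

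For (\ref{Derivative2}): write $h_G = \iota\circ\th_G$ with $\th_G(A) = \sum_k \zeta_k (A_{\alpha_k,\alpha_k}\hme)^{(p)}$, as in the proof of Proposition \ref{Prop2}. There it is already computed that $\dsD\th_G(B) = -\Omega_G(B)$ (using symmetry of each summand, and once more that we are working on $\Ss_p$). The inversion-derivative identity $\dsD\iota(M) = -(M^T)\hme\otimes M\hme$ stated just before the proofs, applied at $M = \th_G(B)$ with $\th_G(B)\hme = h_G(B)$, together with the chain rule, gives
\[
\dsD h_G(B) \;=\; \dsD\iota\bigl(\th_G(B)\bigr)\,\dsD\th_G(B) \;=\; \bigl(h_G(B)\otimes h_G(B)\bigr)\,\Omega_G(B),
\]
and post-multiplying by $M_p$ produces (\ref{Derivative2}).

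For the last formula, note $H_{0,1}(B) = Q_{0,1}\vec\, h(B)$ and $h = \tilde h \circ \iota$ for the function $\tilde h: A\mapsto -A_D\hmh A\, A_D\hmh$ used in the proof of Proposition \ref{Prop1}. Equation (\ref{th}) of that proof, evaluated at $A = B\hme = K$, reads
\[
\dsD\tilde h(K) \;=\; -M_p\{P\otimes K_D\hme\}J_p \;-\; K_D\hmh\otimes K_D\hmh,
\]
where $P = \tilde h(K)$. Composing with $\dsD\iota(B) = -(B\hme\otimes B\hme)$ and noting that the two minus signs cancel, the chain rule gives
\[
\dsD h(B) \;=\; \Bigl\{(K_D\hmh\otimes K_D\hmh) + M_p(P\otimes K_D\hme)J_p\Bigr\}(B\hme\otimes B\hme) \;=\; \Gamma(B)\,(B\hme\otimes B\hme),
\]
with $\Gamma$ as in Proposition \ref{Prop1} (\ref{asymP}). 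Left-multiplying by $Q_{0,1}$ and post-multiplying by $M_p$ yields (\ref{Derivative1})'s counterpart for $H_{0,1}$.

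None of the three steps is genuinely deep: the only care needed is bookkeeping the trailing $M_p$ (which the appendix preamble licenses) and recognising that the Jacobians of $\th_G$ and of $\tilde h\circ\iota$ have already been written down inside the earlier proofs, so no fresh calculation is required. The mildest technical point is the identity $\vec(B\hme)^T = (\vec B)^T(B\hme\otimes B\hme)$ used to compress the two pieces of $\dsD f_A$ into a single factored form; this is a direct application of $\vec(ABC) = (C^T\otimes A)\vec B$ listed in the appendix preamble.
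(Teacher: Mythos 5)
Your proof is correct and follows exactly the route the paper sketches: the paper states Lemma \ref{LemmaDerivatives} without proof, remarking only that (\ref{Derivative1}) is assembled from the derivatives in \citet[][pp.~178,179]{Magnus1999} and that the other two formulas can be deduced from the proofs of Propositions \ref{Prop1} and \ref{Prop2}, which is precisely what you do (chain rule through $\iota$, the already-computed Jacobians of $\th_G$ and $\th$, and the trailing $M_p$ for symmetry). The only blemish is the typo at the end where you refer to ``(\ref{Derivative1})'s counterpart'' when you mean the third, unnumbered identity for $\dsD H_{0,1}$.
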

\begin{proof}[Proof of Proposition \ref{PropDeviance}]
The second order Taylor expansion of $\log \det(\cdot)$ is 
\[
	\log \det (A + X) \ = \ \log \det A \ + \ \left\{\vec\, (A^T)^{-1}\right\}^T\vec X \ - \ \frac{1}{2} \left\{\vec\, (X^T) \right\}^T \left\{(A^T)^{-1} \otimes A^{-1}\right\} \vec X \ + \ o(||X||^2),
\]
cf.\ \citet[][pp.~108, 179, 184]{Magnus1999}. Applying this to the deviance test statistic yields
\begin{eqnarray}
\hD_n(G_0,G_1)  & \ = \ & n \left\{ \log \det(\hS_{G_0}) - \log \det(\hS_{G_1})\right\} 
									\ = \ - n \log \det \left( \hS_{G_1} \hS_{G_0}\hme \right) \nonumber \\
								& \ = \ & - n \trace\left(  \hS_{G_1} \hS_{G_0}\hme - I_p \right)
									\ + \ \frac{n}{2} \trace\left\{ \left( \hS_{G_1}\hS_{G_0}\hme - I_p \right)^2 \right\}
									\ + \ o\left( n || \hS_{G_1}\hS_{G_0}\hme - I_p ||^2 \right) \nonumber \\
							  & \  \sim \ & \frac{n}{2} \left\{\vec\,\left( \hS_{G_1} - \hS_{G_0} \right) \right\}^T
								  	\left( \hS_{G_0}\hme \otimes \hS_{G_0}\hme \right) 
								 				\vec\,\left( \hS_{G_1} - \hS_{G_0} \right),
								 				\qquad n \to \infty. \label{Schritt1}			
\end{eqnarray}
The asymptotic equivalence follows because
\begin{enumerate}[(1)]
\item $\trace\left(  \hS_{G_1} \hS_{G_0}\hme - I_p \right) 
				= \left\{ \vec\,\left(\hS_{G_1} - \hS_{G_0} \right) \right\}^T \vec \hS_{G_0}\hme \ = \ 0$, \
				which is a consequence of (\ref{h_G}), 
and  \\[0.1ex]			
\item $n || \hS_{G_1}\hS_{G_0}\hme - I_p ||^2 \, \le \, 
				\left(n^{1/2}||\hS_{G_1} -  S || + n^{1/2}||\hS_{G_0} -  S || \right)^2 ||\hS_{G_0}\hme ||^2
				= O_P(1), \qquad n \to \infty$. 
\end{enumerate}
Applying Lemma \ref{LemmaAE} to $H = h_{G_1}$ and using (\ref{Derivative2}) we find further
\[
	n^{1/2} \vec\,\left(\hS_{G_0} - \hS_{G_1} \right) \ \sim \ 
	n^{1/2} 	\left( \hS_{G_0} \otimes \hS_{G_0} \right) \Omega_{G_1}(\hS_{G_0}) M_p  \vec\,\left(\hS_{G_0} - \hS_{G_1} \right) 
\]
and from (\ref{Schritt1})
\be \label{Schritt2}
	\hD_n(G_0,G_1) \ \sim \ 
	\frac{n}{2} \left\{ \vec\,\left( \hS_{G_1} - \hS_{G_0} \right) \right\}^T
								   M_p \Omega_{G_1}(\hS_{G_0}) 
								  	\vec\,\left( \hS_{G_1} - \hS_{G_0} \right), \qquad n \to \infty.
\ee
Next we introduce the Lagrange multiplier \citep[][p.~131]{Magnus1999}. Since $\hS_{G_0}$ solves the constrained optimization problem (\ref{cop2}) with $A = \hS\!_n$, there exists a vector $\lambda \in \R^{q_{0,1}}$ such that
\[ 
						\dsD \left( f_{\hS_{G_1}} \circ h_{G_1} \right) \left(\hS_{G_0} \right) \ = \ 
	\lambda^T \dsD \left( H_{0,1}       \circ h_{G_1} \right) \left(\hS_{G_0} \right),
\] 
which transforms to $\displaystyle \
	M_p \Omega_{G_1}(\hS_{G_0}) \vec\,( \hS_{G_1} - \hS_{G_0}) \ = \ 
	M_p \Omega_{G_1}(\hS_{G_0}) \Gamma(\hS_{G_0})^T Q_{0,1}^T  \lambda$, \ 
cf.\ Lemma \ref{LemmaDerivatives}. \\[0.5ex]
We left-multiply both sides by $\,\hS_{G_0}\hh \otimes \hS_{G_0}\hh\,$ and solve for $\lambda$.
\begin{eqnarray*}
	& & 	M_p \Omega_{G_1}(\hS_{G_0}) \vec\,\left( \hS_{G_1} - \hS_{G_0} \right) \\ 
	& & = \ \ M_p \Omega_{G_1}(\hS_{G_0})  	\Gamma(\hS_{G_0})^T  Q_{0,1}^T 
	\left\{Q_{0,1} R_{G_1}(\hS_{G_0})  Q_{0,1}^T  \right\}^{-1}
	Q_{0,1} \Gamma(\hS_{G_0}) M_p \Omega_{G_1}(\hS_{G_0}) \vec\,\left( \hS_{G_1} - \hS_{G_0} \right).
\end{eqnarray*}
We substitute the right-hand side for the left-hand side of this equation in (\ref{Schritt2}), apply again Lemma \ref{LemmaAE}, this time to $H = H_{0,1} \circ h_{G_1}$, which leads to
\[
	n^{1/2} Q_{0,1} \vec \hP_{G_1} \ \sim \ 
	n^{1/2} Q_{0,1} \Gamma(\hS_{G_0}) M_p \Omega_{G_1}(\hS_{G_0}) \vec\,\left( \hS_{G_1} - \hS_{G_0} \right),
\]
and obtain
\[
	\hD_n(G_0,G_1) \ \sim \ 
	\frac{n}{2} \left(\vec \hP_{G_1}\right)^T Q_{0,1}^T 
	\left\{Q_{0,1} R_{G_1}(\hS_{G_0})  Q_{0,1}^T  \right\}^{-1}
	 Q_{0,1} \vec \hP_{G_1}, \qquad n \to \infty.
\]
Finally $R_{G_1}(\hS_{G_0}) \sim R_{G_1}(\hS\!_n)$ as $n \to \infty$, since both sides converge to $R_{G_1}(S)$. 
\end{proof}
\begin{proof}[Proof of Corollary \ref{CorDev}]
Part (1) is straightforward. For part (2) we take, as in Proposition \ref{PropWald}, the detour via $\tilde{S}\!_n = \det(\hS\!_n)^{-1/p} \hS\!_n$ and make use of Lemma \ref{Lemma1} to ensure that $\tilde{S}\!_n(\X_n^{(n)})$ meets the assumptions of Proposition \ref{PropDeviance}.
\end{proof}


\small

\end{document}